\newtheorem{lemma}{Lemma}
\newtheorem{theorem}{Theorem}
\newcommand{\cV}{\ensuremath{\mathcal V}\xspace}
\newcommand{\cF}{\ensuremath{\mathcal F}\xspace}
\newcommand{\WSP}{the WSP}
\newcommand{\FPT}{\ensuremath{\mathsf{FPT}}\xspace}
\begin{document}

\title{Tight Lower Bounds for the Workflow Satisfiability Problem Based on the Strong Exponential Time Hypothesis}

\author{Gregory Gutin and Magnus Wahlstr{\"o}m 
\\Royal Holloway, University of London, UK\\
}
 \date{}
\maketitle   

\begin{abstract}       
The Workflow Satisfiability Problem (WSP) asks whether there exists an assignment of authorized users to the steps in a workflow specification, subject to certain constraints on the assignment. The problem is NP-hard even when restricted to just not equals constraints.
Since the number of steps $k$ is relatively small in practice, Wang and Li (2010) introduced a parametrisation of WSP by $k$. Wang and Li (2010) showed that, in general, the WSP is W[1]-hard, i.e., it is unlikely that there exists a fixed-parameter tractable (FPT) algorithm for solving the WSP. Crampton et al. (2013) and Cohen et al. (2014) designed FPT algorithms of running time $O^*(2^{k})$ and $O^*(2^{k\log_2 k})$ for the WSP with so-called regular and user-independent constraints, respectively. In this note, we show that there are no algorithms of running time $O^*(2^{ck})$ and $O^*(2^{ck\log_2 k})$ for the two restrictions of WSP, respectively, with any $c<1$, unless the Strong Exponential Time Hypothesis fails. 
\end{abstract} 

\section{Introduction}

The Workflow Satisfiability Problem (WSP) is a problem studied in the security research community, with important applications to information access control. In a WSP instance, one is given a set of $k$ \emph{steps} and a set of $n$ \emph{users}, and the goal is to find an assignment from the steps to the users, subject to some instance-specific \emph{constraints} and \emph{authorization lists}; see formal definition below. 
In practice, the number of steps tends to be much smaller than the number of users. Hence it is natural to study the problem from the perspective of \emph{parameterized complexity}, taking $k$ as a problem parameter.
In general, the resulting parameterized problem is W[1]-hard~\cite{WangLi10}, hence unlikely to be FPT, but for some natural types of constraints the problem has been shown to be FPT.
In particular, Crampton \emph{et al.}~\cite{CrGuYeJournal} gave an algorithm with a running time of~$O^*(2^k)$ for so-called \emph{regular} constraints, and Cohen \emph{et al.}~\cite{CoCrGaGuJo14} gave an algorithm with a running time of $O^*(2^{k \log k})$\footnote{All logarithms in this paper are of base 2.} for \emph{user-independent} constraints; see below. User-independent constraints in particular are common in the practice of access control. 
It was also shown that assuming the Exponential Time Hypothesis (ETH)~\cite{ImPaZa01}, these algorithms cannot be improved to running times of $O(2^{o(k)})$ or $O(2^{o(k \log k)})$, respectively~\cite{CrGuYeJournal,CoCrGaGuJo14}. 
Still, because of the importance of the problem, the question of moderately improved running times, e.g., algorithms of running time $O(2^{ck})$, respectively $O(2^{c k \log k})$, for some $c<1$, remained open and relevant. 
In this paper, we will show that no such algorithms are possible, unless the so-called Strong Exponential Time Hypothesis (SETH)~\cite{ImPa01} fails -- that is, up to lower-order terms, the algorithms cited above are time optimal.

In the remainder of this section, we formally introduce the Workflow Satisfiability Problem (WSP) and some families of constraints of interest for \WSP. We briefly overview \WSP{} literature that considers \WSP{} as a parameterized problem, as  suggested by Wang and Li  \cite{WangLi10}, and state our main results. We prove the results in the next section.

\paragraph{\bf WSP.} In \WSP, the aim is to assign authorized users to the steps in a workflow specification, subject to some constraints arising from business rules and practices.
The  Workflow Satisfiability Problem has applications in information access control (e.g.\ see~\cite{ansi-rbac04,BaBuKa14,BeFeAt99}), and it is extensively studied in the security research community~(e.g.\ see~\cite{BaBuKa14,Cr05,Santos2015,WangLi10}).  
In \WSP, we are given a set $U$ of \emph{users}, a set $S$ of \emph{steps}, a set $\mathcal{A} = \{A(s) :\ s \in S\}$ of \emph{authorization lists}, where $A(s) \subseteq U$ denotes the set of users who are authorized to perform step $s$, and a set $C$ of \emph{constraints}.
In general, a \emph{constraint} $c \in C$ can be described as a pair $c = (T, \Theta)$, where $T \subseteq S$ is the \emph{scope} of the constraint and $\Theta$ is a set of functions from $T$ to $U$ which specifies those assignments of steps in $T$ to users in $U$ that satisfy the constraint (authorizations disregarded).  Authorizations and constraints described in \WSP{} literature are relatively simple such that we may assume that all authorisations and constraints can be checked in polynomial time (in $n=|U|$, $k=|S|$ and $m=|C|$).
Given a \emph{workflow} $W = (S, U, \mathcal{A}, C)$, $W$ is \emph{satisfiable} if there exists a function  $\pi: S \rightarrow U$ called a {\em plan} such that
\begin{itemize}
\item $\pi$ is {\em authorized}, i.e., for all $s \in S$, $\pi(s) \in A(s)$ (each step is allocated to an authorized user);
\item $\pi$ is {\em eligible}, i.e., for all $(T,\Theta) \in C$, $\pi|_{T}  \in \Theta$ (every constraint is satisfied).
\end{itemize}

Wang and Li  \cite{WangLi10} were the first to observe that the number $k$ of steps is often quite small and so can be considered as a parameter. As a result, \WSP{} can be studied as a parameterized problem. 
Wang and Li  \cite{WangLi10}  proved that \WSP{} is {\em fixed-parameter tractable} (\FPT)  if it includes only some special types of practical constraints (authorizations can be {\em arbitrary} as in all other research on WSP mentioned below). This means that \WSP{} restricted to the types of constraints in \cite{WangLi10}
can be solved by an {\em \FPT{} algorithm}, i.e., an algorithm of 
running time $O(f(k)(n+k+c)^{O(1)})=O^*(f(k)),$ where $f(k)$ is a computable function of $k$ only and $O^*$ hides polynomial factors. 
However, in general, \WSP{} is $\mathsf{W[1]}$-hard \cite{WangLi10}, which means that it is highly unlikely that, in general, \WSP{} is FPT.
\footnote{For recent excellent introductions to fixed-parameter algorithms and complexity, see, e.g., \cite{FPTbook,DowneyF13}.}  The paper of Wang and Li  has triggered 
an extensive study of FPT algorithms for \WSP{} from both theoretical and algorithm engineering points of view. We will briefly overview literature on the topic after introduction of some important families of \WSP{} constraints. In what follows, for a positive integer $p$, $[p]$ denotes the set $\{1,2,\dots , p\}$.

\paragraph{\bf WSP Constraints.} We now introduce three families of WSP constraints which consecutively extend each other. 
Let $T$ be a subset of $S$. A plan $\pi$ satisfies a \emph{steps-per-user counting constraint} $(t_\ell,t_r,T)$, if a user performs either no steps in $T$ or between $t_\ell$ and $t_r$ steps. Steps-per-user counting constraints generalize the cardinality constraints which have been widely adopted by \WSP{} community~\cite{ansi-rbac04,BeboFe01,JoBeLaGh05,SaCoFeYo96}. 

For $T\subseteq S$ and $u\in U$ let $\pi\colon\ T\to u$ denote the plan that assigns every step of $T$ to $u$. A constraint $c=(L,\Theta)$ is \emph{regular} if it satisfies the following condition: For any partition $L_1,\ldots ,L_p$ of $L$ such that for every $i \in [p]$ there exists an eligible\footnote{We consider only constraints whose scope is a subset of $L$.} plan $\pi\colon L \rightarrow U$ and user $u$ such that $\pi^{-1}(u) = L_i$, the plan $\bigcup_{i =1}^p (L_i \to u_i)$, where all $u_i$'s are distinct, is eligible. Consider, as an example, a steps-per-user counting constraint $(t_\ell,t_r,L)$. Let  $L_1,\ldots ,L_p$ be a partition of $L$ such that for every $i \in [p]$ there exists an eligible plan $\pi\colon L_i \rightarrow U$ and user $u$ such that $\pi^{-1}(u) = L_i$. Observe that for each $i\in [p],$ we have $t_{\ell} \leq |L_i|\leq t_r$ and so the plan $\bigcup_{i =1}^p (L_i \to u_i)$, where all $u_i$'s are distinct, is eligible. Thus, any steps-per-user counting constraint $(t_\ell,t_r,L)$ is regular.

A constraint $(L,\Theta)$ is {\em user-independent} if whenever $\theta \in \Theta$  and $\psi \colon U \to U$ is a permutation then $\psi \circ \theta \in \Theta$.
In other words, user-independent constraints do not distinguish between users. Observe that all regular constraints are user-independent; however some user-independent constraints are not regular \cite{CrGuYeJournal}.

\paragraph{\bf FPT Algorithms for the WSP.} 
Crampton {\em et al.} \cite{CrGuYeJournal} found a faster \FPT{} algorithm, of running time $O^*(2^k)$, to solve the special cases of WSP studied by Wang and Li \cite{WangLi10} and 
showed that the algorithm can be used for all regular constraints (all constraints studied in  \cite{WangLi10} are regular). Cohen {\em et al.}~\cite{CoCrGaGuJo14} showed that \WSP{} with only user-independent constraints is \FPT{} and can be solved by an algorithm of running time $O^*(2^{k\log k}).$ 
A simpler $O^*(2^{k\log k})$-time algorithm was designed by Karapetyan {\em et al.} \cite{KaGaGu15} for WSP with user-independent constraints. Also an $O^*(2^{k\log k})$-time algorithm was obtained by Crampton {\em et al.}  \cite{CrGuKa15} for a natural optimization version of WSP, the Valued WSP, with (valued) user-independent constraints. The algorithms of these three papers were implemented in \cite{CoCrGaGuJo15,KaGaGu15,CrGuKa15}, respectively, and, in computational experiments, the implementations demonstrated a clear superiority of the FPT algorithms over well-known off-the-shelf solvers, the pseudo-boolean SAT solver SAT4J and the MIP solver CPLEX, for hard WSP and Valued WSP instances (in particular, the off-the-shelf solvers could not find solutions to many instances for which the FPT algorithm found solution within a few minutes). 

Crampton {\em et al.} \cite{CrGuYeJournal} and Cohen {\em et al.}~\cite{CoCrGaGuJo14}, respectively, showed that under the Exponential Time Hypothesis (ETH) \cite{ImPaZa01}, there are no algorithms of running time $O^*(2^{o(k)})$ and $O^*(2^{o(k\log k)})$, respectively, for \WSP{} with regular and user-independent constraints, respectively. However, these results leave possibility of the existence of algorithms of running time $O^*(2^{ck})$ and $O^*(2^{ck\log k})$, respectively, with $c<1$. Such algorithms would not only be of purely theoretical interest, at least in the case of user-independent constraints. The aim of this note is to show that, unfortunately, such algorithms do not exist unless 
the Strong Exponential-Time Hypothesis (SETH) fails. Recall that SETH \cite{ImPa01} states that
\[
\lim_{t \to \infty} \inf \{c \geq 0:\ \textrm{$t$-SAT has an algorithm in time } O(2^{cn})\} = 1.
\]

SETH is a stronger hypothesis than ETH, and has been used repeatedly to argue that various algorithms are ``probably optimal''~\cite{SETH9,LokshtanovMS11a,AbboudW14}.
In this sense, we show that the above-mentioned algorithms for regular respectively user-independent WSP are probably optimal, i.e., that they cannot be improved by current state of the art techniques and that improving them is as hard as improving the running time of SAT algorithms.

\section{Lower Bounds}

It is easy to prove that \WSP{} with regular constraints cannot be solved in time $O^*(2^{ck})$ for any $c<1$ unless SETH fails via a simple reduction from {\sc Set Splitting}. 
In {\sc Set Splitting}, we are given a set $S$ and a family $\{S_1,\dots ,S_p\}$ of its subsets, and our aim is to decide whether the there is a function $f:\ S \rightarrow \{1,2\}$ such that both $f^{-1}(1)\cap S_i$ and $f^{-1}(2)\cap S_i$ are nonempty for every $i\in [p].$ Cygan {\em et al.} \cite{SETH9} proved that {\sc Set Splitting} cannot be solved in time $O^*(2^{c|S|})$ for any $c<1$, unless SETH fails.
To reduce {\sc Set Splitting} to the WSP with regular constraints, let $S$ be the set of WSP steps, $U=\{1,2\}$, $A(s)=U$  for each $s\in S$, and $C=\{(1,|S_i|-1,S_i):\ i\in [p]\}.$  It remains to recall that $(1,|S_i|-1,S_i)$ is a steps-per-user counting constraint, which is regular.

In the rest of this section, we prove that \WSP{} with user-independent constraints cannot be solved in time $O^*(k^{ck})$ for any $c<1$ unless SETH fails. We will show it by an appropriate reduction from $r$-SAT to \WSP{} with user-independent constraints via $(d,r)$-CSP, the Constraint Satisfaction Problem with domain size $d$ and every constraint of arity at most $r$. In $(d,r)$-CSP, we will consider only {\em clause-like constraints}, which are constraints with only one forbidden assignment for the scope variables.\footnote{Note that clauses of CNF SAT are clause-like constraints, which "justifies" the term clause-like. Clearly, an arbitrary CSP constraint can be decomposed into clause-like constraints.}

Let us fix a constant arity $r$, and let $\cF$ be an $r$-SAT formula with $n$ variables. Let us fix a function $f(n)\in O(n^{o(1)}) \cap \omega(\log n)$ such that $n/f(n)$ is a power of 2, e.g., $\frac{1}{2}\log n \log \log n\le f(n)\le \log n \log \log n$.
Let $d=n/f(n)$. 
We will first convert $\cF$ to an instance of $(d,r)$-CSP with $\lceil n/\log d\rceil$ variables, then 
reduce this instance to a WSP instance with appropriate size parameters. 
The following is our first step
(which is simply done by grouping variables).

\begin{lemma}
  There is a reduction from \textsc{$r$-SAT} with $n$ variables
  to \textsc{$(d,r)$-CSP} with only clause-like constraints and with $k=\lceil n/\log d \rceil $ variables, where $d= n/f(n)$.
  The reduction runs in polynomial time.
\end{lemma}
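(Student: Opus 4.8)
The plan is to reduce $r$-SAT to $(d,r)$-CSP by the standard device of grouping variables into blocks and encoding each block's truth assignment by a single CSP variable over a domain of size $d$. First I would partition the $n$ Boolean variables of $\cF$ into $k=\lceil n/\log d\rceil$ consecutive blocks $B_1,\dots,B_k$, each of size at most $\log d$ (note $\log d$ is an integer since $d$ is a power of $2$, and the last block may be smaller). For each block $B_j$ I introduce one CSP variable $x_j$ whose domain is $[d]$; since $|B_j|\le\log d$, every Boolean assignment to the variables in $B_j$ corresponds to at least one domain value of $x_j$, and I fix an arbitrary such correspondence, padding with dummy values when $|B_j|<\log d$. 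This gives a bijection between (partial) truth assignments and tuples $(a_1,\dots,a_k)\in[d]^k$ modulo the padding, and I would phrase it as: an assignment $\varphi$ to $\cF$ corresponds to the CSP assignment $x_j\mapsto$ (code of $\varphi$ restricted to $B_j$).

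Next I would translate the clauses. A clause of $\cF$ has arity at most $r$, hence mentions variables lying in at most $r$ distinct blocks, say $B_{j_1},\dots,B_{j_s}$ with $s\le r$; so it becomes a constraint on the CSP variables $x_{j_1},\dots,x_{j_s}$, which has arity at most $r$ as required. This constraint forbids exactly those domain tuples whose decoded Boolean assignment falsifies the clause. A single clause can forbid many tuples (a clause on one Boolean variable inside a block of size $\log d$ forbids half the domain values of that $x_j$), so I would then invoke the remark in the footnote: decompose each such CSP constraint into clause-like constraints, one per forbidden tuple, each forbidding a single assignment to its scope variables. This keeps arity $\le r$ and yields only clause-like constraints; the number of resulting constraints is polynomial in $n$ since each original clause produces at most $d^{r}\le n^r$ forbidden tuples and there are at most $\binom{n}{r}2^r=O(n^r)$ clauses. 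Finally I would also add clause-like constraints ruling out the padding values, i.e. for each $j$ with a short block, forbid each unused domain value of $x_j$ individually — again polynomially many, each of arity $1$.

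For correctness I would argue both directions: a satisfying assignment of $\cF$ yields, via the block codes, a CSP assignment satisfying every translated clause-constraint and avoiding all padding values; conversely any satisfying CSP assignment avoids padding values (by the extra constraints), so each $x_j$ decodes to a genuine Boolean assignment on $B_j$, and these glue to a Boolean assignment of $\cF$ satisfying every clause (since the clause-constraint derived from it is satisfied). The number of CSP variables is exactly $k=\lceil n/\log d\rceil$, the domain size is $d=n/f(n)$, and the whole construction is clearly computable in time polynomial in $n$ (hence in the input size). I do not expect a genuine obstacle here; the only mildly delicate points are bookkeeping — ensuring $\log d$ is an integer so blocks are well-defined (guaranteed by the choice of $f$ making $n/f(n)$ a power of $2$), handling the short last block via padding constraints, and checking that the decomposition into clause-like constraints does not blow up the instance beyond polynomial size, which follows since $d^r=(n/f(n))^r\le n^r$.
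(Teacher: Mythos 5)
Your proof is correct and follows essentially the same variable-grouping approach as the paper: blocks of $\log d$ Boolean variables become CSP variables over a domain of size $d$, and each clause is expanded into at most $d^r\le n^r$ clause-like constraints forbidding the individual falsifying tuples. The only (immaterial) difference is that the paper pads the variable set with $o(n)$ dummy Boolean variables so every block is full, whereas you keep a short last block and exclude its unused domain values with unary clause-like constraints; both handle the bookkeeping correctly.
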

\begin{proof}
  Let the variables of $\cF$ be $X=\{x_1, \ldots, x_n\}$ and $\ell = \log d  = O(\log n)$. 
  For simplicity, add extra variables to $\cF$ so that $n$ is a multiple of $\ell$. 
  Note that this requires adding at most $\ell=o(n)$ new variables.
  We group $X$ into $k=n/\ell$ variable groups $\cV=\{V_1, \ldots, V_k\}$ of $\ell$ variables per group. 
  We also define a new domain $D=\{0,1\}^\ell$. For a variable group $V_i$ and a tuple $b=(b_1,\ldots,b_\ell) \in D$,
  the statement $V_i=b$ is interpreted as the assignment where the $j$'th member of $V_i$ gets value $b_j$. 
  Hence assignments $\cV \to D$ are in 1-1 relationship with assignments $X \to \{0,1\}$. 
  
  Next, for every clause $C$ in $\cF$, we proceed as follows. Let $V(C)$ be the scope of $C$, and
  observe that $C$ is falsified by exactly one assignment to $V(C)$. Similarly, the problem
  $(d,r)$-CSP allows us to arbitrarily specify forbidden assignments to sets of up to $r$ variables.
  Clearly, the variables of $V(C)$ occur in at most $r$ variable groups in $\cV$. We can simply
  enumerate all assignments to these variable groups, these being at most $|D|^r = d^r \leq n^r$
  (recall that $r$ is a constant), and for every such assignment that is an extension of the 
  assignment forbidden by $C$, we add a constraint to $(d,r)$-CSP forbidding this assignment.
  Since this is a polynomial number of constraints for every clause of $\cF$, 
  this can be done in polynomial time in total. (Some of the resulting constraints may have arity
  less than $r$, e.g., some constraints may even be unary. This is allowed in our problem model.)
\end{proof}

Next, we show how we reduce from \textsc{$(d,r)$-CSP} to WSP with user-independent constraints. 

\begin{lemma}
  Let $d=n/f(n)$. 
  There is a polynomial-time reduction from \textsc{$(d, r)$-CSP} with only clause-like constraints and with $k=\lceil n/\log d \rceil$ variables
  to the user-independent WSP with $d$ users and $k+d$ steps. 
\end{lemma}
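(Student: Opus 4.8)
The plan is to let the $d$ users of the WSP instance stand for the $d$ values of the CSP domain, and to use a gadget of $d$ auxiliary steps to make this correspondence \emph{referable} in a user-independent way. Concretely, I would introduce $k$ \emph{variable steps} $s_1,\dots,s_k$, one per CSP variable, together with $d$ \emph{value steps} $t_1,\dots,t_d$; I set $A(s)=U$ for every step $s$ (here $|U|=d$), and I use two kinds of constraints, each of which depends only on the pattern of equalities among step assignments and is hence user-independent. The first is a single ``all different'' constraint with scope $\{t_1,\dots,t_d\}$, whose eligible plans are exactly the injective ones; since there are $d$ value steps and $d$ users, in any eligible plan the map $j\mapsto\pi(t_j)$ is a bijection onto $U$, so every user is ``named'' by a unique value step. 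The second kind encodes the clause-like constraints: a CSP constraint forbidding the partial assignment $x_{i_1}=a_1,\dots,x_{i_q}=a_q$ (with $q\le r$) is turned into a WSP constraint with scope $\{s_{i_1},\dots,s_{i_q}\}\cup\{t_{a_1},\dots,t_{a_q}\}$ whose forbidden plans are exactly those $\pi$ with $\pi(s_{i_j})=\pi(t_{a_j})$ for every $j\in[q]$.

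I would then verify user-independence explicitly: if $\psi$ is a permutation of $U$, then $\psi\circ\pi$ induces the very same equality pattern on any scope as $\pi$ (since $\psi$ is injective), so $\psi\circ\pi$ violates a given constraint iff $\pi$ does. Thus all constraints above are user-independent. The reduction plainly runs in polynomial time, it produces $d$ users and $k+d$ steps, and --- since the WSP model imposes no bound on scope sizes --- the unbounded scope of the ``all different'' constraint is allowed; constraints of arity below $r$ (including unary ones) are handled identically, just with a smaller scope.

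For correctness I would argue both directions through the same equivalence. Given a satisfying CSP assignment $\phi\colon[k]\to[d]$, fix an enumeration $u_1,\dots,u_d$ of $U$ and set $\pi(t_j)=u_j$ and $\pi(s_i)=u_{\phi(i)}$; then $\pi$ is authorized, the ``all different'' constraint is satisfied, and the constraint coming from a clause forbidding $x_{i_1}=a_1,\dots,x_{i_q}=a_q$ is violated by $\pi$ iff $\pi(s_{i_j})=\pi(t_{a_j})$ for all $j$, i.e., iff $\phi(i_j)=a_j$ for all $j$, i.e., iff $\phi$ violates that clause; since $\phi$ satisfies every clause, $\pi$ is eligible. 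Conversely, from an eligible plan $\pi$ (automatically authorized, as $A(s)=U$), the ``all different'' constraint makes $j\mapsto\pi(t_j)$ a bijection $[d]\to U$, so for each $i$ there is a unique $j$ with $\pi(s_i)=\pi(t_j)$; setting $\phi(i)$ to be this $j$ gives a CSP assignment, and by the same computation $\phi$ satisfies every clause because $\pi$ is eligible.

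The only genuinely substantive point --- the rest being routine checking --- is the observation underlying the construction: ``receives the same user as value step $t_j$'' is a user-independent way of singling out the value $j$, so that the ``all different'' gadget lets each clause-like CSP constraint be transplanted into a WSP constraint without ever referring to users by name. I expect no real obstacle beyond making sure this gadget does force a bijection (which is why we take exactly $d$ value steps for $d$ users) and that the translated clause constraints capture precisely the forbidden assignments and nothing more.
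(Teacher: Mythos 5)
Your proof is correct, but it takes a genuinely different route from the paper's. The paper also uses $d$ auxiliary steps $s_1,\dots,s_d$ to ``name'' the $d$ users, but it pins them down via the \emph{authorization lists}, setting $A(s_i)=\{i\}$, so that every authorized plan automatically assigns $s_i$ to user $i$; no all-different constraint is needed, and the only constraints in the instance are the translated clause-like ones, each of arity at most $2r$. (This is legitimate because the user-independence restriction applies only to constraints --- authorizations are allowed to be arbitrary throughout this line of work.) Your construction instead gives every step the full authorization list and forces the bijection between value steps and users by a user-independent all-different constraint. Both yield a correct polynomial-time reduction with $d$ users and $k+d$ steps, and your correctness argument is sound. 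The trade-off is the one the paper flags in the remark immediately after the lemma: its version has all constraints of bounded arity at most $2r$, so the result is insensitive to how constraints are encoded (even an explicit listing of $\Theta$ stays polynomial), whereas your all-different constraint has arity $d=n/f(n)$ and admits $d!$ eligible assignments, so you must rely on an implicit encoding of that constraint. This does not affect the lemma as stated, but it is exactly the ``potential concern'' the bounded-arity construction was designed to avoid. In exchange, your version shows the lower bound holds even when every authorization list is all of $U$, which is a mild strengthening.
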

\begin{proof}
  Consider a $(d,r)$-\textsc{CSP} instance with only clause-like constraints; we will use notation as above,
  i.e., variable set $\cV=\{V_1,\ldots,V_k\}$ and domain $D$. However, number and rename elements of $D$ such that $D=\{1,\ldots,d\}$.
  We create a WSP instance with two sets of steps, and users $U=D$. 
  The \emph{fixed steps} are $S_D=\{s_1, \ldots, s_d\}$,  
  where for $i \in [d]$ the authorization list of $s_i$ is $A(s_i)=\{i\}$. 
  The \emph{free steps} are $S_X=\{s'_1, \ldots, s'_k\}$, 
  each of which has a full authorization list $A(s)=U$. 
  
  Recall that every constraint in the $(d,r)$-\textsc{CSP} instance has a single forbidden assignment.
  Next, for every constraint in the $(d,r)$-\textsc{CSP} instance, over a scope $C=\{V_{q(1)},\dots ,V_{q(p)}\}$ ($p\leq r$)
  with a single forbidden assignment $\phi: C \to D$, 
  we add the following constraint to \WSP{} instance:
  $$\neg(\bigwedge^{p}_{i=1} (s'_{q(i)}=s_{\phi(V_{q(i)})})),$$
  where $s'_{t}=s_j$ means that $s'_{t}$ and $s_j$ must be assigned to the same user.
  Note that the above WSP constraints are user-independent as they do not distinguish between users. 
  It is clear that the reduction can be performed in 
  polynomial time in the size of the input. 

  For correctness, we make two observations. First, by construction, for
  every user $i \in U$ and every authorized plan $\phi: S_D \cup S_X \to U$,
  there is exactly one step $s \in S_D$ such that $\phi(s)=i$. 
  Hence, for every step $s' \in S_X$, we have $\phi(s')=i$ if and 
  only if $\phi(s')=\phi(s_i)$. Second, let $\phi$ be an authorized
  plan as above, and define $\phi': \cV \to D$ by $\phi'(V_i)=\phi(s'_i)$. 
  Then (by the previous observation) for every constraint $C$ of the $(d,r)$-\textsc{CSP} instance,
  $\phi'$ satisfies $C$ if and only if $\phi$ is eligible for the corresponding
  constraint of \WSP{} instance. This shows the equivalence of the instances.
\end{proof}

Note that our WSP instance has constraint of bounded arity (at most $2r$). 
This may not be critical, but it alleviates some potential concerns 
(e.g., the specific encoding of the WSP constraints is not important).
We can now wrap up the proof.

\begin{theorem}
  \WSP{} with user-independent constraints cannot be solved in time
  $O^*(2^{ck \log k})$ for any $c<1$ unless SETH fails.
\end{theorem}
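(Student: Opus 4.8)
The plan is to chain together Lemma~1 and Lemma~2 and then carry out a careful accounting of the size parameters. Composing the two reductions, for every fixed arity $r$ we get a polynomial-time reduction from $r$-SAT on $n$ variables to user-independent WSP on an instance with $d = n/f(n)$ users and $K := k + d$ steps, where $k = \lceil n/\log d\rceil$; note that $K$ is the quantity denoted $k$ in the statement of the theorem. Assume for contradiction that for some constant $c < 1$ the user-independent WSP can be solved in time $O^*(2^{cK\log K})$. Running this algorithm on the instance produced by the reduction decides satisfiability of the $r$-SAT formula, so the whole argument comes down to showing that $c\, K\log K \le c'\, n$ for some constant $c' < 1$ and all sufficiently large $n$, with $c'$ independent of $r$.

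For the parameter accounting I would use the two defining properties of $f$. Since $f(n) \in O(n^{o(1)})$ we have $\log f(n) = o(\log n)$, hence $\log d = \log n - \log f(n) = (1-o(1))\log n$ and therefore $k = \lceil n/\log d\rceil = (1+o(1))\,\tfrac{n}{\log n}$. Since $f(n) \in \omega(\log n)$ we have $d = n/f(n) = o\!\left(\tfrac{n}{\log n}\right) = o(k)$, so $K = k + d = (1+o(1))\,\tfrac{n}{\log n}$ and $\log K = (1-o(1))\log n$. Multiplying the two estimates gives $K\log K = (1+o(1))\,n$, where --- crucially --- the $o(1)$ term is a function of $n$ alone (through $f$) and does not depend on $r$. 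Consequently the composed algorithm runs in time $O^*\!\left(2^{c(1+o(1))n}\right)$; the polynomial factors hidden by $O^*$ are $n^{O(r)}$, i.e.\ $2^{o(n)}$ for each fixed $r$, so after absorbing them the running time is $2^{(c+o(1))n}$, which for all large enough $n$ is at most $2^{c'n}$ for any fixed $c'$ with $c < c' < 1$.

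It then remains to invoke SETH. The above shows that, for every fixed $r$, $r$-SAT can be solved in time $O(2^{c'n})$ with one and the same constant $c' < 1$. Hence $\inf\{\gamma \ge 0 : r\text{-SAT has an }O(2^{\gamma n})\text{ algorithm}\} \le c'$ for every $r$, so the limit in the definition of SETH is at most $c' < 1$, a contradiction. Thus no such WSP algorithm can exist unless SETH fails; and since, as observed after Lemma~2, the constructed WSP instances have constraints of arity at most $2r$, the bound is insensitive to how the constraints are encoded.

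The point that will need the most care is the calibration of $f$, which must be simultaneously small enough ($n^{o(1)}$) that grouping variables into blocks of $\log d \approx \log n$ bits keeps the number of CSP/WSP variables down to $\approx n/\log n$ (so that $K\log K$ has leading term exactly $n$, rather than a constant larger than $1$ times $n$), and large enough ($\omega(\log n)$) that the $d$ auxiliary ``fixed steps'' introduced in Lemma~2 are negligible next to $k$. The other thing to watch is that all the $o(1)$ slack stays uniform in $r$, so that a single threshold $c' < 1$ works across all arities --- this is exactly what upgrades the argument from an ETH-style statement to a SETH lower bound.
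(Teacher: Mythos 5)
Your proposal is correct and follows essentially the same route as the paper: chain the two lemmas, establish $k+d \sim n/\log n$ and hence $(k+d)\log(k+d) \sim n$ using the two defining properties of $f$, and conclude by the uniformity of the constant $c'<1$ over all arities $r$. Your additional remarks on absorbing the $n^{O(r)}$ polynomial factors and on why the $o(1)$ slack is independent of $r$ only make explicit what the paper leaves implicit.
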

\begin{proof}
  In this proof, for functions $g'(n)$ and $g''(n)$, we write $g'(n)\sim g''(n)$ if $g'(n)=g''(n)(1+o(1))$. Observe that 
  $g'(n)\sim g''(n)$ and $g''(n)\sim g'''(n)$ imply  $g'(n)\sim g'''(n).$

  Chaining the two reductions above, we have a polynomial-time reduction
  from an $r$-SAT instance $\cF$ on $n$ variables to \WSP{} instance on $k+d$ variables,
  where $d=n/f(n) $ and $k=\lceil n/\log d\rceil$. In particular, we can write
  \[
  k \sim \frac{n}{\log n - \log f(n)} = \frac{n}{\log n} \cdot \frac {\log n}{\log n-\log f(n)} =
  \frac{n}{\log n}\left(1+\frac{\log f(n)}{\log n - \log f(n)}\right) \sim \frac{n}{\log n}.
  \]
  Similarly, we have $d=\lceil n/f(n)\rceil =o(n/\log n)$, hence 
  \[
  k'=k+d \sim \frac{n}{\log n}.
  \]  
  Now note that
  \[
  k' \log k' = \frac{n}{\log n}(1+o(1))(\log n - \log \log n + o(1)) \sim n.
  \]
  Hence for any $c<1$, solving \WSP{} instance in the stated time would imply
  solving every $r$-SAT instance for every $r$ in time $O(2^{c'n})$ for some $c'<1$ independent of $r$.
  This would contradict SETH.
\end{proof}

\section*{Acknowledgments} Gutin's research was partially supported by EPSRC grant  EP/\linebreak[1]K005162/1 and by Royal Society Wolfson Research Merit Award.


\end{document}